\documentclass[10pt]{article}

    \usepackage{epsfig}
    \usepackage{amsmath,amsthm}
    \usepackage{amstext}
    \usepackage{amsfonts}
    \usepackage{amssymb}

    \usepackage{eucal}
    \usepackage{graphicx}

\parskip 1ex
\textwidth 6.5in
\textheight 9.0in
\topmargin -5mm
\headsep 1 cm
\oddsidemargin -0.1in
\evensidemargin -0.1in
\parindent 0in



\newcommand{\ls}[1]
   {\dimen0=\fontdimen6\the\font \lineskip=#1\dimen0
\advance\lineskip.5\fontdimen5\the\font \advance\lineskip-\dimen0
\lineskiplimit=.9\lineskip \baselineskip=\lineskip
\advance\baselineskip\dimen0 \normallineskip\lineskip
\normallineskiplimit\lineskiplimit \normalbaselineskip\baselineskip
\ignorespaces }
\ls{1.0}

\newtheorem{definition}{Definition}[section]
\newtheorem{lemma}{Lemma}[section]

\newtheorem{theorem}{Theorem}[section]
\newtheorem{example}{Example}[section]

\begin{document}
\baselineskip 0.1in

\title{Monotonic Preference Aggregation Mechanisms for Buying a Shareable Resource \thanks{This material is based upon
work supported in part by the U.S. Air Force Office of Scientific Research (AFOSR) under grant number MURI FA9550-10-1-0573.} }
\author{Vijay Kamble and Jean Walrand\\
Dept. of Electrical Engineering and Computer Sciences,\\
University of California, Berkeley\\
{vjk,wlr}@eecs.berkeley.edu}
\maketitle

\ls{0.95}

\date{}

\maketitle

\begin{abstract}
Situations where a group of agents come together to jointly buy a resource that they individually cannot afford to buy are commonly observed in markets. 
For example in the US market for radio spectrum, a recent proposal invited small firms who would benefit from gaining additional access to spectrum to jointly submit bids for blocks of spectrum with the idea that its utilization could be shared. In such a scenario, the problem is to design a mechanism that truthfully elicits and aggregates the privately held preferences of these agents, and enables them to act as a single decision-making body in order to participate in the market. In this paper, we design a class of mechanisms called \emph{monotonic aggregation mechanisms} that achieves this under a specific setting. We assume that the resource is being sold in a sealed-bid second-price auction that solicits bids for the entire resource. Our mechanism truthfully elicits utility functions from the buyers, prescribes a joint bid, and prescribes a division of the payment and the resource in the event that they win the resource in the auction. This mechanism further satisfies a popular notion of collusion-resistance known as coalition-strategyproofness. We give two explicit examples of this generic class for the case where the utility functions of the buyers are non-decreasing and concave.

\end{abstract}

\ls{0.95}

\date{}

\maketitle
\section{Introduction}
Examples of resources that are produced and sold as discrete units on the supply side, but which could be shared or be divided amongst buyers on the demand side are ubiquitous. For example there is a single fare to be paid for a taxi-ride no matter how many people take it, but multiple people can potentially share this fare; or an entire house can be rented for a fixed monthly payment that can be shared between roommates. There are other examples like laboratories jointly purchasing an expensive scientific equipment for experimentation or firms buying some advanced computing resource to be shared amongst themselves.

In many cases, this is because the resource by its nature itself can only produced in discrete quantities but can be shared, or it could also be because, although the resource is divisible, it is easier and perhaps more cost-efficient to sell it in discrete units, where the size of these units is decided on the basis of typical nature of demand. For example, consider the market for radio spectrum, where the spectrum is sold in the form of blocks of contiguous frequencies. In the US market, the majority of the current demand for spectrum is for the purpose of mobile broadband communications, and the blocks are designed with these requirements in mind (e.g., see the upcoming incentive auctions for repurposing the spectrum used by TV broadcasters for mobile communications (LTE) \cite{incentive_auction_ro}). The players in this market are the large cellular service providers like AT\&T, Verizon, T-mobile etc. who compete for several blocks of spectrum in different geographical regions nationwide. But there is another pool of small interested parties who advocate the allocation of these blocks for free unlicensed use (e.g. WiFi) as opposed to selling them for exclusive licensed use \footnote{It has been argued (see for example \cite{milgrom2011case}) that an open-access unlicensed spectrum can act as an enabler for technological innovations that would increase social welfare and future tax revenue for the government, while at the same time mitigating the inefficiencies of exclusive licensed use. But opponents argue that an unlicensed spectrum may cause the government to lose out on the substantial revenue that it generates by sale to licensed users, which could have also been devoted to social benefit.}.
In a proposal in \cite{bykowsky2008market} that was further analyzed in \cite{Bykowsky10}, the authors suggest that groups consisting of smaller content or service providers who benefit from additional access to spectrum could jointly submit bids for a shared license, which will compete with bids for exclusive licenses from the bigger firms. 

Participating in such a market as a group with the intention of sharing the bought resource entails the collective decision-making problem of performing aggregate purchasing decisions when the utilities of the different buyers in the group are privately held. For example, two firms who intend to jointly bid in the market for radio spectrum have to decide how many blocks to bid for, how much should the bid be and how to divide the spectrum and the payment in the event that they win any spectrum. Naturally, all of these decisions have to go hand-in-hand and any scheme that helps them take these joint decisions fairly has to rely on its ability to elicit the privately known utilities of these agents. The goal of this paper is to design preference aggregation mechanisms for making such joint purchasing decisions. 

We focus on the case where there is a single discrete resource offered for sale in a sealed-bid second price auction that accepts bids for purchasing the \emph{entire} resource. Note that this includes the setting where there is no competition and the seller offers the resource for a fixed price. Further, we assume that the resource is divisible or shareable and the buyer's preferences are captured by a utility function that assigns real non-negative values to different proportions in which resource can be utilized. We will focus on the case where these utility functions are concave, non-negative and non-decreasing. A group of buyers then needs to decide what bid is to be submitted to the auction, and how the resource and payment should be divided in the event that they win the resource. 

We present a class of mechanisms called \emph{monotonic aggregation mechanisms} that enable a group to take these collective decisions. This mechanism truthfully elicits utility functions from the buyers, prescribes a joint bid, and prescribes a division of the payment and the resource in the event that they win in the auction. This mechanism moreover satisfies a popular notion of collusion-resistance known as coalition-strategyproofness (also called group-strategyproofness in literature). A mechanism is coalition-strategyproof if no coalition of buyers can find a deviation from truthfulness such that no buyer in the coalition is worse off and at least one buyer is strictly better off, irrespective of the reports of the buyers not in the coalition. 

The most important component of this mechanism is the specification of the resource and payment shares for different contingencies that may arise in the auction. These shares need to satisfy a `monotonicity' condition in order for the incentive properties of the mechanism to hold (hence the name). This condition is implicitly defined with respect to the class of functions that the buyers' utility functions belong to. We give two explicit instances of this mechanism that satisfy this condition for the case where these utility functions are concave, non-negative and non-decreasing. 
\section{Model: Buying a divisible resource}
A set (or group) $L$ of $n$ agents would like to collectively buy a single resource that they intend to share. We assume that the resource is being sold in a sealed-bid second-price auction with some reserve price $\omega \geq 0$, in which the resource is allotted to the highest bidder as long as his bid is higher than $\omega$, at the price equal to the maximum of the second highest bid and $\omega$ \footnote{Our results in fact hold for any deterministic dominant strategy truthful auction.}. 


The resource is assumed to be divisible and each agent $i$ in the group has a utility $U_i(x_i)$, expressed in monetary value, for a fraction $x_i$ of the resource. We assume that $U_i(x_i)$ is the maximum payment that a buyer is willing to make for the fraction $x_i$ of the resource. This utility function $U_i:[0,1]\rightarrow \mathbb{R}_{+}$ is known only to buyer $i$ and the functions $\{U_i: i\in L\}$ are assumed to belong the class $\mathcal{C}$ defined as follows.

\begin{definition}
$\mathcal{C}$ is the class of concave and non-decreasing functions $U:[0,1]\rightarrow \mathbb{R}_{+}$, such that $U(0)=0$.
\end{definition}

In order to participate in the auction, the group has to submit a single bid for the resource, and make the required payment computed by the second-price auction rule in the case that they win. Our goal is to design a mechanism that accomplishes the following two tasks:
\begin{enumerate}
\item Elicit individual utility functions from the agents and then output a group bid to enter into the external auction.
\item Prescribe a division of the resource and that of the payment needed to be paid in the external auction amongst the buyers, in the event that they win the resource.
\end{enumerate}
Such a mechanism will be called an \emph{aggregation mechanism} since it aggregates the preferences of all the agents in a group to result in a single decision-making entity. 


\section{Monotonic Aggregation Mechanisms: a structural overview through examples}
We first illustrate the structure of our mechanism with the help of a couple of examples. 

\begin{example}{\bf (Buying a resource for a fixed price)} Consider a resource $A$ that is being sold for a fixed price $p^*$. Suppose the group $L$ that intends to buy the resource consists of three buyers $1$, $2$ and $3$ with privately known utility functions $U_1(x)=x$, $U_2(x)=\sqrt{x}$ and $U_3(x)=\ln(1+x)$. First, for each subset $A\subseteq L$, the mechanism fixes two vectors corresponding to resource shares and payment shares respectively: $(x_1(A),\cdots,x_n(A))$ and $(y_1(A),\cdots,y_n(A))$ such that $\sum_{i\in A} x_i(A)=\sum_{i\in A} y_i(A)=1$. These shares satisfy certain additional conditions that will be defined later. We stress here that these shares are chosen by the mechanism before it solicits utility reports from the buyers.  For now, assume that mechanism chooses, $x_i(A)=y_i(A)=\frac{1}{|A|}$ if $i\in A$ and $0$ otherwise. This schedule of shares for the different subsets is announced to the buyers. The mechanism then solicits the utility functions of the buyers within the class $\mathcal{C}$ (i.e. the message space of possible reports is the class $\mathcal{C}$). Let these reported utility functions be $G_1$, $G_2$ and $G_3$. Once these reports are collected, the mechanism starts by first considering the entire set of buyers $S_1=L$.  Using the reported utility functions, it determines whether for each buyer $i\in S_1$, 
$$p^*y_i(S_1)\leq G_i(x_i(S_1)),$$
i.e. if $\frac{p^*}{|S_1|}\leq G_i(\frac{1}{|S_1|})$. If this condition is satisfied for all the buyers in  $S_1$ then it means that according to the reported utility functions, each buyer $i$ can afford to pay a fraction $y_i(S_1)$ of the price $p^*$ for a fraction $x_i(S_1)$ of the resource. The mechanism then buys the resource and both the resource and the price are divided according amongst the buyers according to the corresponding shares. If the condition is not satisfied for a set of buyers $T_1\subseteq S_1$, then they are removed to result in a smaller set of buyers $S_2=S_1\setminus T_1$, and the procedure is repeated until the mechanism either finds a subset of buyers that can together afford to buy the resource for the corresponding shares, or all the buyers are removed, in which case the resource is not bought. To illustrate this procedure, assume that the buyers are truthful in their reports, i.e. $G_i=U_i$ and that the price of the resource is $p^*=0.9$.
Then for the set $S_1=L=\{1,2,3\}$, mechanism checks if
\begin{eqnarray*}
(\frac{0.9}{3},\frac{0.9}{3},\frac{0.3}{3})=(0.3,0.3,0.3)\overset{?}{\preceq} (1/3,\, \frac{1}{\sqrt{3}},\, \ln(1+\frac{1}{3})),
\end{eqnarray*}
which does not hold since although $\frac{1}{3}>0.3$ and $ \frac{1}{\sqrt{3}}\sim 0.57>0.3$, $\ln(4/3)\sim 0.27 <0.3$.  Thus the mechanism removes buyer $3$ and considers remaining set of buyers $S_2=\{1,2\}$. It now checks whether
\begin{eqnarray*}
(\frac{0.9}{2},\frac{0.9}{2})=(0.45,0.45)\overset{?}{\preceq} (1/2,\, \frac{1}{\sqrt{2}}),
\end{eqnarray*}
which holds. Hence the resource is bought and both the resource and the price are split equally between buyers $1$ and $2$.
\end{example}

\begin{example} {\bf (Participating in an auction)} Consider now the same three buyers with utility functions $U_1(x)=x$, $U_2(x)=\sqrt{x}$ and $U_3(x)=\ln(1+x)$, but now assume that the resource $A$ is being sold in a sealed bid second-price auction. In this case also, for each subset $A\subseteq L$, the mechanism fixes two vectors, $(x_1(A),\cdots,x_n(A))$ and $(y_1(A),\cdots,y_n(A))$, corresponding to resource shares and payment shares respectively. Let us again assume that mechanism chooses, $x_i(A)=y_i(A)=\frac{1}{|A|}$ if $i\in A$ and $0$ otherwise, and this schedule of shares is announced to the buyers. The mechanism then solicits the utility functions of the buyers from within the class $\mathcal{C}$. Let $G_1$, $G_2$ and $G_3$ be these reported functions. The mechanism then starts by first considering the entire set of buyers $S_1=L$.  Using the reported utility functions, it computes the maximum cumulative payment $k$ such that for each buyer $i$ in $S_1$, $ky_i(S_1)\leq G_i(x_i(S_1))$. Denote this maximum value of $k$ as $\beta_1$. $\beta_1$ is thus the maximum total payment that the set of buyers $S_1$ can make, such that each buyer in $S_1$ can afford to buy his share of the resource in that set for the corresponding share of the total payment (i.e. his payment is less than his utility for his share). To illustrate this operation, suppose that the three buyers had reported their utility functions truthfully. Then for the set $S_1=L=\{1,2,3\}$, mechanism computes
\begin{eqnarray*}
\beta_1&=&\max\{k\geq 0: (\frac{k}{3},\frac{k}{3},\frac{k}{3})\preceq (1/3,\, \frac{1}{\sqrt{3}},\, \ln(1+\frac{1}{3}))\}\\
&=&\min\{1,\,\sqrt{3},\, 3\ln(\frac{4}{3})\}\\
&=&3\ln(\frac{4}{3})\approx 0.86.
\end{eqnarray*}
Now let $T_1$ be the subset of buyers whose utility for their share exactly equals their share of the maximum payment $\beta_1$, i.e. the bottleneck buyers. Then the buyers in $T_1$ are removed from the set $S_1$ to result in the smaller set $S_2$. Thus in this case, buyer $3$ is removed from the group to form $S_2=\{1,2\}$. The mechanism continues with the remaining set $S_2$, finds the corresponding payment $\beta_2$, and continues so on to find the rest of the vector $\bar{\beta}$ in a similar way until no buyer remains. Thus we have

\begin{eqnarray*}
\beta_2&=&\max\{k\geq 0: (\frac{k}{2},\frac{k}{2})\preceq (1/2,\, \frac{1}{\sqrt{2}})\}\\
&=&\min\{1,\,\sqrt{2}\}=1.
\end{eqnarray*}
Here $T_2=\{1\}$ and hence buyer $1$ is removed from $S_2$ to result in $S_3=\{2\}$, and we finally have
\begin{eqnarray*}
\beta_3&=&\max\{k\geq 0: k\leq 1\}=1.
\end{eqnarray*}
Hence the vector $\bar{\beta}=(0.86,1,1)$. The largest value in this vector is submitted to the auction. If a payment $p^*$ is to be made in the auction to win the resource, the mechanism looks for the largest subset $S_i$ (i.e. the one with the smallest index $i$) that can afford to pay the price, and both the price and the resource is divided according to the corresponding shares in that subset. Thus in this example $\beta^*=1$ is submitted as a bid in the auction. Suppose that there is a single other competing buyer in the auction and suppose that his bid is $0.6$. Thus the minimum payment required to win the auction for the group is $0.6$. Now $r=\min\{i:\beta_i\geq 0.6\}=1$. Thus the entire group, i.e. $S_1=L$ is allotted equal shares of the resource and each buyer pays $0.2$ $(\frac{0.6}{3})$ to the seller. Suppose instead that the other buyer in the auction submitted a bid of $0.9$. Then in that case $r=\min\{i:\beta_i\geq 0.9\}=2$. Thus the group $S_2=\{1,2\}$ is allotted equal shares of the resource, while buyer $3$ does not get any share of the resource. Both the winning buyers pay $0.45$ to the seller. Thus in short, the resource is shared between the largest subset of buyers who can jointly afford to pay the price, divided according to the prescribed shares for that subset.
\end{example}

These two examples are an instance of a class of mechanisms that are obtained by varying the resource and payment shares for the different subsets of buyers. In our main result in this paper, we show that if these shares are chosen in a way that they satisfy a certain monotonicity property in relation to the class of functions $\mathcal{C}$, then the mechanism is truthful and moreover, it is coalition-strategyproof. 
\begin{definition} ({\bf Coalition-strategyproofness}) An aggregation mechanism is coalition-strategyproof if for any coalition of buyers $S\subseteq L$, fixing any feasible utility function reports of all buyers not in $S$, for every feasible deviation of the buyers in $S$ from truthful reporting, either all the buyers are indifferent between the original outcome and the new resulting outcome or at least one buyer is strictly worse off.
\end{definition}
We then give explicit characterizations of sharing schedules that satisfy this property.

\section{Related work}
Our model is related to the problem of sharing the cost of a jointly utilized resource, which has a rich history in the economics literature. The typical model adapted to our setting is as follows (see \cite{billera1982allocation, mirman1982demand}. Given $n$ agents with demands $q_i$, the total cost for utilizing the resource is a function $C(q_1,\cdots,q_n)$. The problem is to define a division of this cost amongst the agents. 

There are two angles from which this problem has traditionally been approached. One is the normative approach which looks at the problem from the point of view of fairness, where the goal is to characterize the set of sharing rules that satisfy certain desirable axiomatic properties. The seminal work in this setting is the Shapley value sharing rule \cite{shapley1952value}, which considers the specific case of binary demand i.e. $q_i$ can either be $0$ or $1$, which can be interpreted as an agent participating or not. For a fixed order of the agents, each agent has a marginal cost for his participation in that order. Shapley value assigns to an agent the average marginal cost over all the possible $n!$ orders. This can be interpreted as each agent being charged the uniform average of the line integral of his marginal cost of participation along paths from $(0,0,\cdots,0)$ to $(1,1,\cdots,1)$ that traverse along the edges of the hypercube. This rule is extended to the case of variable demands by the Shapley-Shubik rule \cite{shubik1962incentives}: each agent is charged the uniform average of the line integral of his marginal cost of consumption along paths from $(0,0,\cdots,0)$ to $(q_1,q_2,\cdots,q_3)$ that traverse along the edges of the hypercuboid. The third popular rule is the Aumann-Shapley rule \cite{billera1982allocation, aumann1974values}, that charges each agent the integral of his marginal costs along the diagonal path from $(0,0,\cdots,0)$ to $(q_1,q_2,\cdots,q_3)$. Finally, the serial sharing rule \cite{moulin1992serial, friedman1999three} arranges the demands in an increasing order so that $q_1\leq q_2\leq\cdots\leq q_n$, and charges each agent the integral of his marginal costs along the path made up of line segments connecting $(0,0,\cdots,0)$ to $(q_1,q_1,\cdots,q_1)$ to $(q_1,q_2,\cdots,q_2)$ and so on to $(q_1,q_2,\cdots,q_n)$. Several axiomatizations have been proposed that characterize these `additive' rules, i.e. rules that act as linear operators on the cost structure, see \cite{sprumont2010axiomatization, moulin1995additive, friedman1999three}. Many non-additive rules have also been considered and axiomatized, see e.g. \cite{sprumont1998ordinal, moulin2007fair, koster2007moulin}.

The other approach, which is the one taken in this paper, brings incentives into focus. Here agents are assumed to be endowed with preferences over their consumption of the resource and the cost that they have to pay. Given a cost function and a cost-sharing rule, the agents play the `demand-game' where each player reports a demand and is charged according to the sharing rule. The problem is to characterize cost-sharing rules result in `demand games' that satisfy nice equilibrium properties. The seminal work on this approach \cite{moulin1999incremental} has focused on \emph{incremental cost sharing methods}, which for a reported vector of demands $(q_1,q_2,\cdots,q_n)$ charge each agent the integral of his marginal costs along some path from $(0,0,\cdots,0)$ to $(q_1,q_2,\cdots,q_n)$. As discussed earlier, examples of such rules are the serial cost sharing rule and the Aumann-Shapley rule. The incentive properties of these mechanisms crucially depend on the shape of the cost function. For the case of integral demands ($q_i$s are integers) and convex preferences, it has been shown that if the cost function has increasing marginal returns and supermodular ($\partial_{ii}C>0$ and $\partial_{ij}C>0$) then the incremental cost sharing mechanisms result in a demand game with a unique strong Nash-equilibrium welfare\footnote{A strong equilibrium is defined as a strategic profile for which no subset of players has a joint deviation that strictly benefits all of them, while all other players are expected to maintain their equilibrium strategies.}  (i.e. the strong equilibrium is unique or if there are multiple such equilibria, then they are welfare equivalent). In the case where the cost function is submodular with decreasing marginal returns ($\partial_{ii}C<0$ and $\partial_{ij}C<0$), only those incremental rules that integrate along paths from $(0,0,\cdots,0)$ to $(q_1,q_2,\cdots,q_n)$ along the edges of the hypercuboid, called the \emph{sequential standalone mechanisms} result in a demand game with this nice property. This result for submodular cost functions bears an important qualification: if the demands are binary, then any cost sharing scheme that satisfies a cross-monotonicity property results in a demand game with these desirable properties (see also \cite{moulinshenker99} for an independent study of this case). 

One can also look at these results from an implementation theory perspective: a social choice function maps the reported preferences of agents to an assignment of consumption levels for the agents and a division of the corresponding total cost. The goal is to characterize social choice functions that result in truthful reporting of preferences a `nice' equilibrium. For the two polar classes of cost functions, the respective incremental sharing mechanisms (or the cross-monotonic sharing mechanisms for the case of binary demands and submodular costs) result in a social choice function that is truthful and moreover coalition-strategyproof.  Of particular interest in literature is the case of additive cost functions, where $C(q_1,\cdots,q_n)=\overline{C}(q_1+q_2+\cdots,+q_n)$, and our focus in this paper is on cost functions of this form. In this case the supermodularity or submodularity conditions translate to the convexity and concavity respectively of the function $\overline{C}$.

The general problem of interest in our paper is designing sharing rules in the case where the cost is assigned to integral quantities of the resource, while the demand is continuous, resulting in a cost function of the form:
\begin{equation}
C(q_1,q_2,\cdots,q_n)=\left\{ \begin{array}{ll}
         \sum_{i=1}^{\lceil q_1+\cdots,+q_n\rceil} a_i & \mbox{if $q_1+\cdots,+q_n >0$};\\
         0& \mbox{otherwise}.\end{array} \right. 
\end{equation}
where $\lceil q\rceil$ is the smallest integer $i$ such that $i\geq q$ and $(a_1,a_2,\cdots)$ is the sequence of non-negative marginal cost for consuming each additional quantity of the resource. We look at the problem from the point of view of incentives and we take the implementation theoretic approach of truthfully eliciting preferences of the agents by proposing incentive-compatible resource and cost allocations, rather than designing a demand game by choosing a cost  sharing rule. If the demands are restricted to be integral and if the sequence $(a_1,a_2,\cdots)$ is either non-increasing or non-decreasing, then the incremental cost sharing schemes discussed earlier result in coalition-strategyproof social choice functions. But this is not true if the demands are continuous. Our mechanism is a step towards bridging this gap for the specific case of unit supply, which can be encoded in a cost function of the form: 

\begin{equation}
C(q_1,q_2,\cdots,q_n) = \left\{ \begin{array}{ll}
	0 & \mbox{if $q_1+\cdots,+q_n = 0$};\\
         p^* & \mbox{if $0<q_1+\cdots,+q_n \leq 1$};\\
        \infty & \mbox{if $q_1+\cdots,+q_n>1$}.\end{array} \right. 
\end{equation}

Finally, the general setting where the group of buyers jointly bid for the resource in an external auction, i.e. where the price $p^*$ is extraneously defined has not been considered before.

\section{Definition and main results}
In this section we first give a formal definition of the mechanism and then prove some of its properties. Let $\overline{\mathcal{C}}$ be a subset of the class of utility functions $U:[0,1]\rightarrow R_{+}$ that satisfy $U(0)=0$.\\
\hrule
\noindent {\bf Monotonic aggregation mechanism for the class of utility functions $\overline{\mathcal{C}}:$}\\ 
For each subset $A\subseteq L$, fix two tuples of $n$ non-negative numbers $(x_1(A),\cdots,x_n(A))$ and $(y_1(A),\cdots,y_n(A))$, corresponding to the resource shares and the payment shares respectively, such that the following three conditions are satisfied:
\begin{enumerate}
\item $\sum_{i=1}^n x_i(A)=1$ and $x_i(A)>0$ only if $i\in A$. 
\item $\sum_{i=1}^n y_i(A)=1$ and $y_i(A)>0$ only if $i\in A$. 
\item (monotonicity) For any $C>0$, for any two subsets $A$ and $B$ such that $A\subseteq B$, and for any $i\in A$, if $$U_i(x_i(B))< Cy_i(B)$$
then 
$$U_i(x_i(A))< Cy_i(A)$$ for every $U_i \in \overline{\mathcal{C}}$.
\end{enumerate}
The mechanism solicits utility function reports $G_i\in \overline{\mathcal{C}}$ from all the agents and computes a vector of values $$\bar{\beta}=(\beta_1,\beta_2,\cdots,\beta_m)$$ corresponding to diminishing subsets of agents $S_1\supset S_2\supset \cdots \supset S_m$ as follows. 
\begin{itemize}
\item Let $S_1=L$. For each subset $S_j$, define
\begin{eqnarray}\label{psbam}
\beta_j&=&\max\{k\geq 0\,: (ky_1(S_j),ky_2(S_j),\cdots,ky_n(S_j))\nonumber\\
&&\preceq \big(G_1(x_1(S_j)),\cdots,G_n(x_n(S_j))\big)\}
\end{eqnarray}
where $\preceq$ denotes a component-wise $\leq$ inequality.
\item Let $T_j$ be the set of agents with positive payment shares who force the inequality in the definition above, i.e. all agents $i\in S_j$ such that $y_i(S_j)>0$ and $\beta_jy_i(S_j)=G_i(x_i(S_j))$. Then $S_{j+1}=S_j\setminus T_j$ and $m$ is the smallest integer such that $S_{m+1}=\phi$.
\item Let $\beta^*=\max\{\beta_1,\cdots,\beta_m\}$. The mechanism submits the bid $\beta^*$ to the auction. 
\item Suppose the group wins the auction and has to make a payment $p^*\leq \beta^*$ . Let $r=\min\{i:\beta_i\geq p^*\}$. Then each agent $i$ pays $y_i(S_r)p^*$ and gets a fraction $x_i(S_r)$.
\end{itemize}
\hrule
\bigskip
$S_r$ will be called the winning set of buyers. The key requirement of the mechanism is that the sharing schedule should satisfy the implicitly defined monotonicity condition with respect to the class $\overline{\mathcal{C}}$. It says that if a buyer in a set cannot afford to pay his share of some price $C$ for his share of the resource in that set, then she should not be able to do so in any subset of that set, so long as his utility function is in $\overline{\mathcal{C}}$. Another way to state this requirement is to have 
$$\frac{U_i(x_i(B))}{y_i(B)}  \geq \frac{U_i(x_i(A))}{y_i(A)}$$
if $A \subset B$ for any $i\in A$ and for all $U\in \overline{\mathcal{C}}$. This means that the average utility per unit payment decreases for a buyer as the size of the subset decreases. In the next section we will give explicit sharing schedules that satisfy the monotonicity condition with respect to the class $\mathcal{C}$.

Also note that since the utility function report of a buyer is only evaluated at the possible resource shares corresponding to the different subsets that he is in, the entire function need not be elicited. The mechanism only needs to solicit the utilities of the buyers for the different shares he may receive. This corresponds to an elicitation of $2^{n-1}+1$ values in the worst case for each buyer, but can be much lower depending on the choice of these shares. In the examples that we considered, the resource share of each buyer corresponding to a set $A$ is $\frac{1}{|A|}$ if this buyer is in $A$. For this choice of shares, only $n$ utility values corresponding to the shares $\{\frac{1}{i}; i=1,\cdots,n\}$ are needed to be reported. The mechanism nevertheless needs to ensure that the values for the different shares are samples of some function in $\mathcal{C}$. In order to ensure this, it is sufficient to check that the piecewise linear extrapolation of the reported values that results in a function on $[0,1]$ is in $\mathcal{C}$, i.e. it is concave and non-decreasing. 

Following is our main result.

\begin{theorem}\label{thm1}
Suppose that the resource is being sold in a sealed-bid second price auction with a reserve price. Also, assume that a buyer strictly prefers the outcome where he obtains a non-zero fraction of the resource with a payment equal to his utility for that fraction of the resource, to the outcome where he does not obtain anything and makes no payment. Then any monotonic aggregation mechanism is coalition-strategyproof.
\end{theorem}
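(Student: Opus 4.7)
The plan is to reduce coalition-strategyproofness to a clean structural characterization of the winning set. For any report profile $H=(H_1,\ldots,H_n)\in\overline{\mathcal{C}}^n$ and any external auction price $p^*$, let $\mathcal{A}^H(p^*)=\{S\subseteq L : H_i(x_i(S))\geq p^* y_i(S)\text{ for all }i\in S\}$ denote the family of truly-affordable subsets. The first step is to show two things: $\mathcal{A}^H(p^*)$ is closed under finite unions, so it admits a unique $\subseteq$-largest element $W(H)$; and the winning set produced by the mechanism under reports $H$ is exactly $W(H)$ (with the group losing if $W(H)=\emptyset$). Closure under union is immediate from the contrapositive of the monotonicity axiom applied to each $H_i\in\overline{\mathcal{C}}$: if $S,T\in\mathcal{A}^H$ and $i\in S\cup T$ with, say, $i\in S$, then affordability of $i$ in the smaller set $S$ propagates up to the larger set $S\cup T$. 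That the iterative procedure lands on $W(H)$ follows by induction on $j$: the hypothesis $S_j\supseteq W(H)$ rules out the possibility that a bottleneck buyer $i\in T_j$ (with ratio $\beta_j<p^*$) lies in $W(H)$, because monotonicity applied to $W(H)\subseteq S_j$ would force $i$'s ratio in $S_j$ to be at least $p^*$; hence $S_{j+1}\supseteq W(H)$, and the procedure terminates the first time $S_j\in\mathcal{A}^H$, which by maximality is $W(H)$ itself.

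With this characterization in hand, fix a coalition $C$, a truthful profile $U$, a deviation $G$ with $G_i=U_i$ for all $i\notin C$, and let $W^{tr}=W(U)$ and $W^{dev}=W(G)$. I would split into cases by which of the two profiles wins the auction. If only the truthful profile wins, then every $i\in W^{tr}\cap C$ strictly prefers the truthful outcome (either $u_i^{tr}>0$ outright, or $u_i^{tr}=0$ with a non-zero share, which by the break-even hypothesis is strictly preferred to receiving nothing); weak improvement therefore forces $W^{tr}\cap C=\emptyset$, and then every buyer in $C$ is indifferent. If only the deviated profile wins, then $W^{dev}\neq\emptyset$, but the argument of the main case below gives $W^{dev}\in\mathcal{A}^U(p^*)$ while truthful losing forces $W^{tr}=\emptyset$, hence $W^{dev}=\emptyset$, a contradiction. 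The case where neither profile wins is trivial.

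The main case is when both profiles win at the common price $p^*$, and here I would prove a two-way inclusion $W^{dev}=W^{tr}$. For $W^{dev}\subseteq W^{tr}$: for $i\in W^{dev}\setminus C$ one has $G_i=U_i$, so $W^{dev}\in\mathcal{A}^G$ gives $U_i(x_i(W^{dev}))\geq p^* y_i(W^{dev})$; for $i\in W^{dev}\cap C$, weak improvement combined with $u_i^{tr}\geq 0$ yields $U_i(x_i(W^{dev}))-p^* y_i(W^{dev})\geq 0$. Thus $W^{dev}\in\mathcal{A}^U(p^*)$, and maximality of $W^{tr}$ gives $W^{dev}\subseteq W^{tr}$. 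For $W^{tr}\subseteq W^{dev}$ it suffices to show $W^{tr}\in\mathcal{A}^G$. The only risk is some $i\in W^{tr}\cap C$ reporting $G_i(x_i(W^{tr}))<p^* y_i(W^{tr})$; but monotonicity applied to $G_i$ on the inclusion $W^{dev}\subseteq W^{tr}$ would then force $G_i(x_i(W^{dev}))<p^* y_i(W^{dev})$ whenever $i\in W^{dev}$, contradicting $W^{dev}\in\mathcal{A}^G$. So such a deviator must satisfy $i\notin W^{dev}$, in which case $u_i^{dev}=0$ while $u_i^{tr}\geq 0$ is strictly preferred (by the break-even hypothesis if $u_i^{tr}=0$), violating weak improvement. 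Hence $W^{tr}\in\mathcal{A}^G$, giving $W^{tr}\subseteq W^{dev}$; the two winning sets coincide, the allocations coincide, and no buyer in $C$ can strictly improve.

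I expect the key subtle step to be the ``$W^{tr}\in\mathcal{A}^G$'' half of the sandwich, which is precisely where the break-even preference hypothesis does essential work: without it, a coalition member could be indifferent between staying in the winning set at zero net utility and being removed, letting the rest of $C$ shrink the winning set at no cost to that member while strictly benefiting those who prefer smaller subsets. Everything else --- closure of $\mathcal{A}^H$ under unions, the equivalence of the iterative procedure with $\max\mathcal{A}^H$, and the handling of the auction's winning/losing dichotomy via the second-price structure --- follows routinely from the monotonicity axiom.
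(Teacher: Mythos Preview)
Your proof is correct and takes a genuinely different (and cleaner) route than the paper's.

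The paper does not isolate the lattice structure of the family $\mathcal{A}^H(p^*)$. Instead it decomposes the coalition's deviation into a \emph{sequence} of single-agent deviations $\bar\beta^0\to\bar\beta^1\to\cdots\to\bar\beta^N$, where at each step only one more coalition member switches from her true report to her deviated report. It then proves two operational lemmas about how removing a single buyer from the starting set affects the winning set, and uses these inside an induction on $j$ that carries along a three-way disjunction: either $W^j=W^0$, or some member of $\overline{C}=C\cap W^0$ has dropped out of $W^j$, or some coalition member in $W^j$ is making a strict loss. The final step reads off coalition-strategyproofness from whichever of the three disjuncts survives at $j=N$.

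Your approach bypasses the entire sequential machinery. The observation that monotonicity (in contrapositive form) makes $\mathcal{A}^H(p^*)$ closed under unions, and hence gives a unique maximal element that coincides with the mechanism's winning set, is a structural fact the paper never states explicitly, though it is implicit in its Lemma~\ref{int1}. Once you have $W(H)=\max\mathcal{A}^H(p^*)$, the sandwich $W^{dev}\subseteq W^{tr}\subseteq W^{dev}$ under the weak-improvement hypothesis is immediate and symmetric; the paper's inductive invariant is effectively doing the same bookkeeping but one agent at a time. What your approach buys is brevity and a clearer view of \emph{why} the break-even tie-breaking assumption is needed (exactly at the step $W^{tr}\in\mathcal{A}^G$, as you note). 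What the paper's approach buys is a more explicit picture of how an individual misreport perturbs the elimination sequence $S_1\supset S_2\supset\cdots$, which may be useful if one wants to analyse, say, the welfare cost of a specific unilateral deviation. One small point: your handling of the case ``only the truthful profile wins'' tacitly assumes $x_i(W^{tr})>0$ for $i\in W^{tr}\cap C$ when invoking the break-even hypothesis; the degenerate subcase $x_i(W^{tr})=0$ (hence $y_i(W^{tr})=0$) leaves that buyer indifferent rather than strictly worse off, but the overall conclusion still goes through.
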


The mechanism thus truthfully elicits individual preferences and then aggregates them to result in a single joint decision-making entity. In order to prove this theorem, we formalize a few ideas.

In a second price auction with a reserve price, there is a minimum price needed to be paid by an agent to win the resource, which is the maximum of the reserve price and the highest bid of all the other agents. For our group of buyers, let this price be $p^*$. For two different sets of reports of the utility functions by the buyers in the group, we say that outcome of the auction remains the same if the same set of buyers $W\subseteq L$ is the winning set. Since the resource and payment shares depend only on the set of winning buyers, these shares in the two outcomes are the same for all the buyers. Consider a coalition of buyers $C\subseteq L$. Assume that the reports of the utility functions of all the other buyers are fixed. Suppose that if all the agents in $C$ report their utilities truthfully (keeping all other reports fixed), then the vector of values generated is $\bar{\beta}^0=(\beta^0_1,\beta^0_2,\cdots,\beta^0_m)$ and let the winning set of buyers be $W^0$ (which may be empty). Also let $\overline{C}$ be the set of agents in $C$ who are in $W^0$ and let $C'=C\setminus \overline{C}$. 
We first prove the following lemma.

\begin{lemma}\label{int1}\begin{enumerate}
\item 
Suppose that starting from a set of buyers $S$, under a fixed report from all the buyers, the set of winning buyers is some set $W\subseteq S$ (which may be empty). Then starting from a set of buyers $S\setminus M$ where $M \subseteq S \setminus W$, under the same reports, the set of winning buyers is also $W$. 
\item
Suppose that starting from a set of buyers $S$, under a fixed report from all the buyers, the set of winning buyers is some set $W\subseteq S$. Then starting from a set of buyers $S\setminus\{i\}$ where $i\in W$, the set of winning buyers $W'$ satisfies $W'\subseteq G\setminus \{i\}$.
\end{enumerate}
\end{lemma}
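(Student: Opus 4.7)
Both parts of the lemma will be deduced from a single \emph{affordability claim}. Writing the run from $S$ as $S = S_1^0 \supset \cdots \supset S_m^0$ with bids $\beta_j^0$ and bottleneck sets $T_j^0$, so that $W = S_r^0$ with $r = \min\{j : \beta_j^0 \geq p^*\}$, and introducing the shorthand $\theta_\ell(A) = G_\ell(x_\ell(A))/y_\ell(A)$ whenever $y_\ell(A) > 0$, the monotonicity condition rephrases to: $\theta_\ell(A) \leq \theta_\ell(B)$ for nested $A \subseteq B$ with $\ell \in A$. The claim to establish is that for every $A \subseteq S$ with $A \not\subseteq W$, one has $\min_{\ell : y_\ell(A) > 0} \theta_\ell(A) < p^*$. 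To prove it, pick the largest index $k^* < r$ with $A \subseteq S_{k^*}^0$, which exists because $A \subseteq S_1^0$ and $A \not\subseteq S_r^0$; maximality of $k^*$ forces $A \cap T_{k^*}^0 \neq \emptyset$, and any $\ell$ in this intersection satisfies $\theta_\ell(S_{k^*}^0) = \beta_{k^*}^0 < p^*$. Monotonicity then gives $\theta_\ell(A) \leq \theta_\ell(S_{k^*}^0) < p^*$.

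Part 2 follows immediately: for the run from $S' = S \setminus \{i\}$ with $i \in W$, every step $j$ with $S_j' \not\subseteq W$ has $\beta_j' < p^*$ by the claim, so the first step $r'$ at which $\beta_{r'}' \geq p^*$ (if it exists) must satisfy $S_{r'}' \subseteq W$; combined with $i \notin S'$ this yields $W' \subseteq W \setminus \{i\}$. If no such $r'$ exists then $W' = \emptyset$, which is trivially contained in $W \setminus \{i\}$. For Part 1, with $W \subseteq S' \subseteq S$, I will trace the run from $S'$ step by step. At every step $j$ with $S_j' \supsetneq W$ the affordability claim gives $\beta_j' < p^*$; meanwhile, for each $\ell \in W \subseteq S_j'$, the other direction of monotonicity yields $\theta_\ell(S_j') \geq \theta_\ell(W) \geq \beta_r^0 \geq p^* > \beta_j'$, so $T_j' \cap W = \emptyset$ and hence $S_{j+1}' \supseteq W$. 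Since $\sum_\ell y_\ell(S_j') = 1$ forces $T_j'$ to be nonempty, $|S_j'|$ strictly decreases, and the run reaches $S_j' = W$ in finitely many steps. At that step $\beta_j' = \beta_r^0 \geq p^*$, so this is the first step with $\beta_j' \geq p^*$, giving $W' = W$.

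The main subtlety is using monotonicity in both directions within a single argument: as an upper bound on $\theta_\ell(A)$ when the set shrinks (for the affordability claim) and as a lower bound on $\theta_\ell(S_j')$ when the set contains $W$ strictly (showing that $W$-buyers never appear in the bottleneck set during Part 1's run from $S'$). The remaining edge case of $y_\ell(A) = 0$ is benign: such buyers are excluded from $T_j$ by the mechanism's convention, so they neither trigger nor block the elimination procedure.
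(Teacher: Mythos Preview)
Your argument is correct. It rests on the same two uses of monotonicity as the paper's proof (the direct implication to show that buyers outside $W$ cannot afford their share in any subset of $S$, and the contrapositive to show that buyers in $W$ can afford their share in any superset of $W$), so in that sense the underlying ideas coincide.

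The organization, however, is genuinely different. The paper argues Part~1 by first asserting $W\subseteq W'$ and then deriving a contradiction from the assumption $Q=W'\setminus W\neq\emptyset$; Part~2 is dismissed as ``a similar argument.'' You instead isolate a single \emph{affordability claim} --- that $\min_{\ell}\theta_\ell(A)<p^*$ whenever $A\subseteq S$ and $A\not\subseteq W$ --- prove it once, and then obtain both parts by tracing the mechanism's run from $S'$ step by step. This buys you a cleaner and more self-contained proof: Part~2 becomes a one-line corollary, and in Part~1 you avoid the paper's somewhat circular-looking step of reasoning about affordability in $W\cup Q$ before having established what $W'$ actually is. The paper's route is shorter to state but relies on the reader filling in why $W'$ must take the form $W\cup Q$; your step-by-step trace makes this explicit. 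One small point: your setup writes $r=\min\{j:\beta_j^0\geq p^*\}$ as if $W$ were always nonempty, but the lemma allows $W=\emptyset$; your affordability argument still goes through in that case (take $k^*$ as the largest index with $A\subseteq S_{k^*}^0$, unconstrained by $r$), and it would be worth saying so in one line.
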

\begin{proof}
We first prove the first claim. First, we can easily show that $W$ is a subset of the new set of winning buyers $W'$. This is because, if any buyer $i$ in $W$ claims to be able to afford to pay $y_i(W)p^*$ for $x_i(W)$ fraction of the resource, he can also pay $y_i(W\cup Q)p^*$ for $x_i(W\cup Q)$ fraction of the resource. This is because, by the converse implication of the monotonicity assumption, we have that 
if $G_i(x_i(W))\geq Cy_i(W)$ then $G_i(x_i(W\cup Q)) \geq Cy_i(W\cup Q)$ for every $i$ and every $G_i \in \mathcal{C}$.

Thus we just need to prove that $Q=W'\setminus W$ is empty. Suppose not. Then there is some buyer $i\in Q$ who in the original case was removed from the set of buyers $S'\subseteq S$ where $(W\cup Q)\subseteq S'$. This in particular implies that $G_i(x_i(S'))<p^*y_i(S')$. But this again implies, by the monotonicity assumption that $G_i(x_i(W\cup Q))<p^*y_i(W\cup Q)$. This contradicts the assumption that $i$ is in the new winning set. Thus $Q$ has to be empty. For the second claim, we just need to prove that $Q=W'\setminus (W\setminus \{i\})$ is empty, which agains follows from a similar argument.\end{proof}

Now suppose for some fixed reports of all buyers, the vector of values computed by the mechanism is $\bar{\beta}$. Now the the effect of any deviation from these fixed reports by a coalition $C$, manifests itself for the first time by a change in some $\beta_k$ corresponding to some subset $S_k$, to a new value $\beta'_k$. If $\beta'_k<\beta_k$ then this change has been effectively implemented by one buyer $i$ in coalition $C$ by under-reporting. If $\beta'_k>\beta_k$, then there must be at least one buyer $i$ in the coalition $C$ who was forcing the constraint in the set $S_k$ under truthful reporting, i.e. $G_i(x_i(S_k))=\beta_ky_i(S_k)$ and who over-reports. In either case, we say that such a buyer in $C$ is responsible for causing this first change. Knowing this, we decompose the deviation by coalition $C$ into a sequence of deviations by individual agents that sequentially bring out the transformation:
$$\bar{\beta}^0\rightarrow \bar{\beta}^1\rightarrow \bar{\beta}^2\rightarrow\cdots \bar{\beta}^N.$$ 
This sequence is constructed in the following way. First, $\bar{\beta}^0$ is the vector of values computed under truthful reports from all agents in $C$ and certain assumed fixed reports of the other buyers. The individual agent in $C$ whose non-truthful report brings out the first change in $\bar{\beta}^0$ is denoted by $i_1$. Next $\bar{\beta}^1$ is the vector of values computed under truthful reports by all the buyers in $C\setminus i_1$, the non-truthful report of agent $i_1$ and under the assumed fixed reports of the other buyers. Then recursively, we denote $i_j$ to be the individual agent in $C$ whose non-truthful report brings out the first change in $\bar{\beta}^{j-1}$. Then define $\bar{\beta}^j$ to be the vector of values computed under truthful reports by all the buyers in $C\setminus \{i_0,i_1,\cdots, i_j\}$, the non-truthful report of agents $\{i_0,i_1,\cdots, i_j\}$ and under the assumed fixed reports of the other buyers. Finally $\bar{\beta}^n$ is the vector of values computed using the deviated reports of all the agents in $C$ (note that not all agents in $C$ are necessarily non-truthful) and the assumed fixed reports of the other buyers. With each $\bar{\beta}^{j}=(\beta^j_1,\cdots,\beta^j_{m_j})$, are the associated subsets $\{S^j_1,\cdots,S^j_{m_j}\}$ encountered by the mechanism. Further denote $W^j=S^j_{r_j}$ to be the winning set of buyers corresponding to $\bar{\beta}^{j}$, where $r_j=\min\{i: \beta^j_i\geq p^*\}$. We first prove the following.
\begin{lemma}
For $j\in\{1,\cdots,n\}$, if  $i_j\notin W^{j-1}$ then either $W^j=W^{j-1}$ or $W^j=W^{j-1}\cup M \cup \{i_j\}$, where $M\cap W^{j-1}=\phi$ and $i_j$ makes a strict loss being in $W^j$.
\end{lemma}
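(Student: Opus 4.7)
My plan is to localize the first change caused by $i_j$'s deviation and then split into three cases. Let $k$ be the first index at which $\beta^{j-1}_k \neq \beta^j_k$; since only $i_j$'s report changes between the two runs and the sets are determined by the $\beta$-sequence via earlier removals, $S^{j-1}_\ell = S^j_\ell$ for all $\ell \leq k$, and $i_j \in S^{j-1}_k$ (otherwise his report could not affect $\beta^{j-1}_k$). Because $i_j \notin W^{j-1} = S^{j-1}_{r_{j-1}}$, we must have $r_{j-1} > k$, so $\beta^{j-1}_k < p^*$.

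The first thing I would show is $W^{j-1} \subseteq W^j$. Pick any $i \in W^{j-1}$; since $i \neq i_j$, his report $G_i$ is unchanged in the two runs, and affordability in $W^{j-1}$ gives $G_i(x_i(W^{j-1}))/y_i(W^{j-1}) \geq p^*$. By the ratio form of the monotonicity condition stated right after the mechanism's definition, $G_i(x_i(A))/y_i(A) \geq p^*$ for every $A \supseteq W^{j-1}$ containing $i$. An induction on $\ell$ then shows that as long as $W^{j-1} \subseteq S^j_\ell$ and $\beta^j_\ell < p^*$, no member of $W^{j-1}$ is tight at $\beta^j_\ell$, so $W^{j-1} \subseteq S^j_{\ell+1}$; the process terminates at the winning stage $r_j$, yielding $W^{j-1} \subseteq W^j$.

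If $i_j \notin W^j$, I would apply Lemma~\ref{int1}(1) to both runs, starting instead from $L \setminus \{i_j\}$: this yields $W^{j-1}$ under $\bar{\beta}^{j-1}$ (since $i_j \in L \setminus W^{j-1}$) and $W^j$ under $\bar{\beta}^j$ (since $i_j \in L \setminus W^j$). But the two restricted runs see exactly the same reports (those of $L \setminus \{i_j\}$, which never changed), so their outputs coincide, giving $W^j = W^{j-1}$.

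Otherwise $i_j \in W^j$, and I need to establish strict loss. I first rule out $\beta^j_k < \beta^{j-1}_k$: at this lower value every constraint except $i_j$'s is strictly slack (since each held already at the larger $\beta^{j-1}_k$ and the other reports are unchanged), so $T^j_k = \{i_j\}$ and $i_j$ is removed at stage $k$, contradicting $i_j \in W^j$. Hence $\beta^j_k > \beta^{j-1}_k$, and this forces $T^{j-1}_k = \{i_j\}$ (any other originally-tight agent would pin $\beta^j_k$ to $\beta^{j-1}_k$, since its report is unchanged), so under truthful reporting $U_{i_j}(x_{i_j}(S^{j-1}_k))/y_{i_j}(S^{j-1}_k) = \beta^{j-1}_k < p^*$. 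Nesting gives $W^j = S^j_{r_j} \subseteq S^j_k = S^{j-1}_k$, so the ratio form of monotonicity applied to $U_{i_j}$ yields $U_{i_j}(x_{i_j}(W^j))/y_{i_j}(W^j) \leq \beta^{j-1}_k < p^*$, i.e.\ $U_{i_j}(x_{i_j}(W^j)) < p^* y_{i_j}(W^j)$, a strict loss relative to the outside option of zero allocation and zero payment. Taking $M = W^j \setminus W^{j-1} \setminus \{i_j\}$ completes the decomposition $W^j = W^{j-1} \cup M \cup \{i_j\}$ with $M \cap W^{j-1} = \phi$. I expect the main technical hurdle to be this last case, in which the first-change localization (ensuring $W^j \subseteq S^{j-1}_k$) must be combined with monotonicity applied to the true utility $U_{i_j}$ to convert the original tight constraint into strict unaffordability at $W^j$.
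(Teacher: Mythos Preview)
Your argument is correct and a bit more economical than the paper's. Both proofs begin by localizing the first index $k$ at which $\bar\beta^{j-1}$ and $\bar\beta^j$ differ and noting $S^{j-1}_k=S^j_k$ with $\beta^{j-1}_k<p^*$. From there the paper branches on the sign of $\beta^j_k-\beta^{j-1}_k$ and, in the over-reporting case, further on whether $\beta^j_k\geq p^*$ and whether $i_j$ stays tight, ultimately invoking Lemma~\ref{int1}(1) inside each sub-case to recover $W^{j-1}$; the inclusion $W^{j-1}\subseteq W^j$ is argued separately within several branches. You instead (i) prove $W^{j-1}\subseteq W^j$ once, up front, via the ratio form of monotonicity and a clean induction on the stages of the $j$-run; (ii) dispatch the case $i_j\notin W^j$ in one stroke by applying Lemma~\ref{int1}(1) to \emph{both} runs restricted to $L\setminus\{i_j\}$ and observing the restricted reports coincide, a shortcut the paper does not use; and (iii) in the remaining case $i_j\in W^j$, rule out $\beta^j_k<\beta^{j-1}_k$ and then deduce strict loss from the nesting $W^j\subseteq S^{j-1}_k$ together with monotonicity for $U_{i_j}$, exactly as the paper does in its last branch. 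The net effect is the same conclusion with fewer sub-cases; the paper's longer decomposition buys nothing extra beyond what your restriction-to-$L\setminus\{i_j\}$ trick already delivers. One cosmetic remark: you only need $i_j\in T^{j-1}_k$ (not $T^{j-1}_k=\{i_j\}$), and your phrasing ``$r_{j-1}>k$'' should be read as ``$k<r_{j-1}$ or $W^{j-1}=\phi$'' to cover the case where no winning set exists in run $j-1$; neither affects the validity of your argument.
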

\begin{proof}
Let $k$ be the smallest index at which $\bar{\beta}^{j}$ differs from $\bar{\beta}^{j-1}$. Note that $S^{j-1}_k=S^{j}_k$. Now if agent $i_j$ changes the value of $\beta^{j-1}_k$ to $\beta^j_k<\beta^{j-1}_k$, then he forces the constraint to have $G_{i_j}(x_{i_j}(S^{j}_k))=\beta^1_ky_{i_j}(S^{j}_k)$. Since $i_j$ is not in the winning set of buyers under $\bar{\beta}^{j-1}$, $\beta^{j-1}_k<p^*$ and thus $\beta^{j}_k<p^*$. Thus agent $i_j$ is removed from the group and the remaining set of buyers is $S^{j}_k\setminus i_j$. By the first claim in lemma \ref{int1}, the set of winning buyers is again $W^{j-1}$. Next, assume that the agent $i_j$'s report changes the value of $\beta^{j-1}_k$ to $\beta^j_k>\beta^{j-1}_k$. This implies that $i_j$ was forcing the constraint under truthful reporting in step $j-1$, i.e. $U_{i_j}(x_{i_j}(S^{j-1}_k))=\beta^{j-1}_ky_{i_j}(S^{j-1}_k)<p^*y_{i_j}(S^{j-1}_k)$. Now if $\beta^j_k\geq p^*$, then the resource is allotted to the group $S^{j-1}_k=S^{j}_k$ and thus agent $i_j$ is a part of the new group of winning buyers $W^j$, resulting in a strict loss $p^*y_{i_j}(S^{j}_k)-U_{i_j}(x_{i_j}(S^{j}_k)>0$ for him. Thus $W^{j}=W^{j-1}\cup M\cup i_j$ where $M\cap W^{j-1}=\phi$ and $i_j$ makes a strict loss being in $W^{j}$. 

Next if $\beta^j_k< p^*$, then in the case that $i_j$ is forcing the constraint by his report, i.e. $G_{i_j}(x_{i_j}(S^{j}_k))=\beta^j_ky_{i_j}(S^{j}_k)$, he is still removed from the set of buyers and by the first claim in lemma \ref{int1}, the set of winning buyers is again $W^{j-1}$. In the case that $i_j$ does not force the constraint by his report, i.e. $G_{i_j}(x_{i_j}(S^{j}_k))>\beta^{j}_ky_{i_j}(S^{j}_k)$, it follows that some other buyer $m$ was removed from the set of buyers. It cannot be one of the buyers in $W^{j-1}$, since if $y_m(W^{j-1})p^*\leq G_m(x_m(W^{j-1}))$, by the converse implication of monotonicity, $y_m(S^{j}_k)p^*\leq G_m(x_m(S^{j}_k))$ also and thus, since $\beta^j_k< p^*$, we have that $y_m(S^{j}_k)\beta^j_k\leq G_m(x_m(S^{j}_k))$. Thus the agent $m$ is in $S^{j}_k\setminus W^{j-1}\cup i_j$. Now with the new $\bar{\beta}^j$ computed, either $i_j$ is in the winning set of buyers, in which case he makes a strict loss since if $p^*y_{i_j}(S^{j}_k)>U_{i_j}(x_{i_j}(S^{j}_k))$, then he cannot afford to pay $p^*y_{i_j}(S^{j}_r)>U_{i_j}(x_{i_j}(S^{j}_r))$, for any $S^{j}_r\subset S^{j}_k$. Also $W^{j-1}$ has to be in the winning set again by the converse implication of the monotonicity assumption. Thus the claim holds true. Or he is not in the winning set of buyers. In that case suppose he was removed from some set $S^{j}_q$ at some stage $q$ by the mechanism. Then note that since $S^{j}_q$ is not the winning set, the entire set $W^{j-1}$ is has to be in the set of active buyers $S^{j}_{q+1}$, again because of the converse implication of the monotonicity assumption. Thus after stage $q$, the set of remaining buyers is $S^{j}_{q+1}=W^{j-1}\cup Q$ for some $Q\subseteq S^{j}_k\setminus W^{j-1}$ (which may be empty). Thus by the first claim in lemma \ref{int1}, the the set of winning buyers is again $W^{j-1}$. Thus again the claim holds true.
\end{proof}

\begin{lemma}\label{main}
For each $j$ at least one of the following holds.
\begin{enumerate}
\item $W^j=W^0$.
\item $(W^0\setminus W^j)\cap \overline{C}\neq \phi$. 
\item $W^{j} \cap C \neq \phi$ and at least one $i\in W^j\cap C$ makes a strict loss being in $W^{j}$.
\end{enumerate}
 \end{lemma}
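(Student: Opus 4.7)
The plan is to induct on $j$, with the trivial base case $j=0$ verifying condition~1 via $W^0=W^0$. For the inductive step, let $i_j\in C$ be the unique coalition member whose report changes between stage $j-1$ and stage $j$, and let $k$ be the smallest index at which $\bar{\beta}^{j-1}$ and $\bar{\beta}^{j}$ disagree; then $S^{j-1}_\ell=S^{j}_\ell$ for all $\ell\le k$ and $i_j\in S^{j}_k$. Since $i_j$'s report is the only coordinate that changed, one checks easily that $T^{j}_k=\{i_j\}$ whenever $\beta^{j}_k<\beta^{j-1}_k$, and that $\beta^{j}_k>\beta^{j-1}_k$ forces $i_j$ to have been the unique binder at step $k$ under the stage-$(j-1)$ reports.

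In Case~1, $i_j\notin W^{j-1}$, and the preceding lemma yields either $W^{j}=W^{j-1}$ or $W^{j}=W^{j-1}\cup M\cup\{i_j\}$ with $i_j$ strictly losing in $W^{j}$. In the first sub-case each of the three conditions transfers verbatim, since the resource and payment shares depend only on the winning set and the true utilities are untouched; in the second sub-case $i_j\in W^{j}\cap C$ makes a strict loss, so condition~3 holds.

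In Case~2, $i_j\in W^{j-1}\cap C$. The possibility $\beta^{j}_k>\beta^{j-1}_k$ forces $k=r_{j-1}$ (since $i_j$ is not a binder at any $\ell<r_{j-1}$), and then $\beta^{j}_k>\beta^{j-1}_k\ge p^*$ gives $W^{j}=S^{j}_k=W^{j-1}$, so the three conditions transfer. The possibility $\beta^{j}_k<\beta^{j-1}_k$ removes $i_j$ from $S^{j}_{k+1}$ and subdivides by whether $\beta^{j}_k\ge p^*$ or $\beta^{j}_k<p^*$. If $\beta^{j}_k\ge p^*$, then $W^{j}=S^{j}_k\supseteq W^{j-1}$, and the converse direction of the monotonicity property ensures that every truth-telling buyer in $W^{j}$ still satisfies its payment constraint against its true utility, so any IH condition lifts along the inclusion $W^{j-1}\subseteq W^{j}$. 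If $\beta^{j}_k<p^*$, then past step~$k$ the stage-$j$ computation coincides with the stage-$(j-1)$ computation started from $S^{j-1}_k\setminus\{i_j\}$, so part~(2) of Lemma~\ref{int1} yields $W^{j}\subseteq W^{j-1}\setminus\{i_j\}$. Feeding this inclusion into the IH: under IH~1 one has $i_j\in\overline{C}\cap(W^0\setminus W^{j})$ (condition~2); under IH~2 the previous witness in $W^0\setminus W^{j-1}$ remains outside $W^{j}$ (condition~2); and under IH~3 the witness $i^*$ either stays in $W^{j}$ and retains its strict loss by the direct direction of monotonicity on $W^{j}\subseteq W^{j-1}$ (condition~3), or leaves $W^{j}$, in which case it becomes a condition-2 witness provided $i^*\in W^0$.

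The hard part will be the remaining corner of Case~2 where $i_j\in W^{j-1}$, $\beta^{j}_k<p^*$, IH~3 holds, and the witness $i^*$ (necessarily a previously deviated coalition member, since carrying a strict loss at $W^{j-1}$ requires an over-reporting deviator) lies outside $W^0$ yet also leaves $W^{j}$. Here neither condition~2 nor condition~3 is directly visible, and one must revisit the chain $\bar{\beta}^0\to\cdots\to\bar{\beta}^{j-1}$ to argue that whenever IH~3 holds with a witness outside $W^0$, an earlier transition must already have produced a condition-2 witness that still persists at stage $j$. Formalising this will require careful joint use of both directions of the monotonicity property and of Lemma~\ref{int1} to track the true utilities of earlier deviators against the original winning set $W^0$.
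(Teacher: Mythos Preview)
Your inductive skeleton and the case split on whether $i_j\in W^{j-1}$ match the paper's proof, and your use of the preceding lemma for Case~1 is exactly right. The genuine gap is in Case~2 under IH~3 with $\beta^{j}_k<p^*$, the branch you flag as ``the hard part.'' Tracking the IH~3 witness $i^*$ is the wrong move here: when $i^*\notin W^0$ and $i^*\notin W^{j}$, your proposed fix (revisiting the chain to locate a persistent condition-2 witness) is neither an argument nor obviously true. For instance, IH~3 can first arise at some earlier stage $\ell$ via the second sub-case of the preceding lemma ($i_\ell\notin W^{\ell-1}$ over-reports and enters $W^\ell$ with a strict loss), with no condition-2 witness ever having been created along the way.

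The paper avoids this trap by not tracking $i^*$ at all. Instead it decomposes $W^{j-1}=A\cup M$ with $A=W^{j-1}\cap W^0$ and $M=W^{j-1}\setminus W^0$, and uses $W^{j}\subseteq W^{j-1}\setminus\{i_j\}$ to write $W^{j}=A'\cup M'$. If $i_j\in A$ then $i_j\in\overline{C}$ and $i_j\notin W^{j}$, so condition~2 holds with witness $i_j$ (you overlook this simple observation). If $i_j\in M$, one argues directly about $W^{j}$: when $M'\neq\phi$, pick the first $m\in M'$ removed at stage~0, so that the removing set $S'$ contains $W^0\cup M'\supseteq W^{j}$; monotonicity then gives $G_m^{(0)}(x_m(W^{j}))<p^*y_m(W^{j})$, which either contradicts $m\in W^{j}$ (if $m$'s report is unchanged between stages $0$ and $j$) or, since $G_m^{(0)}=U_m$ when $m\in C$, shows $m\in W^{j}\cap C$ incurs a strict loss (condition~3). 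When $M'=\phi$, a symmetric argument comparing $A'$ to $W^0$ yields condition~1 or condition~2. No reference to the old witness $i^*$ is needed.

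One minor correction: in the sub-case $\beta^{j}_k<\beta^{j-1}_k$ with $\beta^{j}_k\ge p^*$, you claim $W^{j}=S^{j}_k\supseteq W^{j-1}$ and then ``lift'' the IH conditions along this inclusion. In fact $\beta^{j-1}_k>\beta^{j}_k\ge p^*$ forces $r_{j-1}\le k$; in either case one gets $W^{j}=W^{j-1}$ exactly (the paper states this directly), so the lifting is unnecessary and your condition-2 transfer, which would fail under a strict inclusion, is not actually at risk.
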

 
\begin{proof}
We will prove this lemma using induction. The claim clearly holds for $j=0$. Now we assume that at least one of the following hypotheses hold true for step $j-1$ for some $j\geq 2$:
\begin{enumerate}
\item $W^{j-1}=W^0$.
\item $(W^0\setminus W^{j-1})\cap \overline{C}\neq \phi$. 
\item $W^{j-1} \cap C \neq \phi$ and at least one $i\in W^j\cap C$ makes a strict loss being in $W^{j}$.
\end{enumerate}

If the agent $i_j\notin W^{j-1}$, then the previous lemma says that either $W^j=W^{j-1}$ or $W^j=W^{j-1}\cup M' \cup \{i_j\}$ where $M' \cap W^{j-1}=\phi$ and $i_j$ makes a strict loss. Since $i_j\in C$, in any case at least one of the hypotheses holds true for step $j$.

Next, suppose that the agent $i_j$ is in $W^{j-1}$. Suppose that the first change in the vector $\bar{\beta}^{j-1}$ happens at index $k$ and the changed value $\beta^j_k$ is such that $\beta^j_k>\beta^{j-1}_k$.  This means that $i_j$ was forcing the constraint under truthful reporting in step $j-1$, and since he is amongst the winning set of buyers $W^{j-1}$, this means that $S^{j-1}_k$ is that winning set and still remains so under $\bar\beta^j$. Hence $W^j=W^{j-1}$ and thus one of the hypotheses is true for step $j$. 

Now suppose that $\beta^j_k<\beta^{j-1}_k$.  Then either $\beta^{j}_k\geq p^*$, in which case $\beta^{j-1}_k\geq p^*$ also and $S^{j-1}_k$ was and still remains the winning set of buyers, which again implies that one of the hypotheses is true for step $j$. Or $\beta^j_k< p^*$, in which case agent $i_j$ is removed from the subset. Now from the second claim of lemma \ref{int1}, the set of winning buyers satisfies $W^j\subseteq W^{j-1}\setminus\{i_j\}$. In the case that $W^{j-1}=W^0$ note that $i_j\in \overline{C}$ and thus $W^{j}\subseteq W^0\setminus\{i_j\}$ implies that the second hypothesis is true. In the case that $(W^0\setminus W^{j-1})\cap \overline{C}\neq \phi$, again the second hypothesis is true. 

In the case that $W^{j-1} \cap C \neq \phi$ and at least one $i\in W^j\cap C$ makes a strict loss being in $W^{j}$, suppose that $W^{j-1}=A\cup M$ where $A\subseteq W^0$ and $M\cap W^0=\phi$. Then we have that $W^j=A'\cup M'$ where $A'\subseteq A \setminus\{i_j\}$ and $M'\subseteq M\setminus\{i_j\}$ and $M'\cap W^{0}=\phi$. Now if $i_j\in A$ then the second hypothesis is true. Suppose then that  $i_j\in M$. 

Now each buyer $i$ in $M'$ is one of two types:
\begin{enumerate}
\item Type $A$:  $i\in L\setminus C$.
\item Type $B$: $i\in \{i_{1},\cdots,i_n\}\setminus \{i_j\}$.
\end{enumerate}

Now since $M'$ is disjoint from $W^{0}$, there must be a buyer $m\in M'$ who at step $0$ was removed from some set of buyers $S'$ such that $W^{0}\cup M'\subseteq S'$. This in particular implies that $G_m(x_m(S'))<p^*y_m(S')$. But this again implies, by the monotonicity assumption that $G_m(x_m(A'\cup M'))<p^*y_m(A'\cup M')$. Now if this buyer $m$ is of type $A$, then his report at step $j$ is the same as his report at step $0$, which is $G_m$. And thus this contradicts the assumption that $m$ is in the new winning set $W^j$. Thus the buyer $m\in M$ who at step $0$ was removed from the set of buyers $S'$, such that $W^{0}\cup M\subseteq S'$, has to be of type $B$. But such a buyer $m$'s report at step $0$ is truthful. Thus this implies that $U_m(x_m(S'))<p^*y_m(S')$. Thus by the monotonicity assumption, $U_m(x_m(A'\cup M'))<p^*y_m(A'\cup M')$ and thus he makes a strict loss being in $W^j$. Thus either there is at least one buyer in $M'$ who faces a strict loss being in $W^j$ and he is in $C$ (since a buyer of type $B$ is in $C$) or $M'=\phi$. In the prior case at least one of the hypotheses holds for step $j$. 

Consider the latter case where $M'=\phi$. Suppose that $A'\subseteq A\subset W^0$. Then there must be a buyer $m\in W^0\setminus A'$ who at step $j$, was removed from the set of buyers $\tilde{S}$ such that $W^0 \subseteq \tilde{S}$. This in particular implies that $G_m(x_m(S'))<p^*y_m(S')$ where $G_m$ is this buyer's report at step $j$. But this again implies, by the monotonicity assumption that $G_m(x_m(W^0))<p^*y_m(W^0)$. Now if this buyer $m$ is of type $A$, then his report at step $0$ is the same as his report at step $j$, which is $G_m$. And thus this contradicts the assumption that $m$ was in the winning set $W^0$. Hence $m$ must be of type $B$ and since $m\in W^0$, he also must be in $\overline{C}$. And thus either $(W^0\setminus A' \cap \overline{C})\neq \phi$ or $A'=A=W^0$. Hence again one of the hypotheses holds true.
\end{proof}

{\bf Proof of theorem \ref{thm1}:} Lemma \ref{main} shows that that one of the following holds true:
\begin{enumerate}
\item $W^{N}=W^0$.
\item $(W^0\setminus W^{N})\cap \overline{C}\neq \phi$. 
\item $W^{N} \cap C \neq \phi$ and at least one $i\in W^N\cap C$ makes a strict loss being in $W^{N}$.
\end{enumerate}In the first case, the deviation by the coalition $C$ has no effect on the outcome of the auction. In the second case, there is at least one buyer in $C$ that was in the winning set under truthful reporting and is not because of the deviation. Since any buyer strictly prefers winning, even if it is under a zero net profit, over not winning anything, such a deviation is not supported by such a buyer. In the third case, there is a buyer in $C$ that makes a strict loss by deviation and hence again this buyer would not support such a deviation. Hence the mechanism is coalition-strategyproof. \qed

There is an implicit assumption in our model that the buyers are individually `small', in the sense that they are not expected to be able to pay for the entire resource on their own, i.e. $U_i(1)$ is small as compared to the typical highest bid of the other agents (or the reserve price). This justifies the assumption that if a buyer is excluded by the mechanism from participating in sharing the resource, he cannot compete against the group on his own for the entire resource, or that he is willing to participate in the mechanism since his `outside option' has no value. But nevertheless, it is interesting to see what the mechanism has to offer to a buyer who could have potentially purchased the entire resource for himself. We define the following notion of `individual consistency' of an aggregation mechanism.

\begin{definition} ({\bf Individual consistency}) Consider the set of buyers $W$ that would have individually been willing to purchase the entire resource for the price $p^*$ offered in the external auction, i.e. for all $i\in W$, $U_i(1)>p^*$.  Then the aggregation mechanism is said to be \emph{individually consistent} if $W\neq \phi$ implies that the group ends up purchasing the resource.
\end{definition}
We then have the following result.
\begin{theorem}\label{thm1}
Any monotonic aggregation mechanism with fixed shares is individually consistent. Further if $W$ is the set of buyers for whom $U_i(1)>p^*$, then all the buyers in $W$ are in the winning set of buyers $S_r$.
\end{theorem}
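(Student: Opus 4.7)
The approach is to invoke the contrapositive of the monotonicity condition: if $i \in A \subseteq B$ and $U_i(x_i(A)) \geq C y_i(A)$, then $U_i(x_i(B)) \geq C y_i(B)$. Applying this with $A = \{i\}$, for which conditions 1 and 2 force $x_i(\{i\}) = y_i(\{i\}) = 1$, yields the key observation: if $U_i(1) \geq C$, then $U_i(x_i(B)) \geq C y_i(B)$ for every subset $B$ containing $i$.

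Take any $i \in W$, so $U_i(1) > p^*$. Choose any $C$ with $p^* < C \leq U_i(1)$; the key observation then gives $U_i(x_i(B)) \geq C y_i(B) > p^* y_i(B)$ for every $B \ni i$ with $y_i(B) > 0$, while trivially $U_i(x_i(B)) \geq 0 = p^* y_i(B)$ when $y_i(B) = 0$. So for every $B \ni i$, $U_i(x_i(B)) \geq p^* y_i(B)$, with strict inequality whenever $y_i(B) > 0$.

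I would then prove by induction on $j$ that $W \subseteq S_j$ for every $j$ up to and including $r = \min\{j : \beta_j \geq p^*\}$. The base case $W \subseteq S_1 = L$ is immediate. For the inductive step, assume $W \subseteq S_j$ and $\beta_j < p^*$ (i.e. $j < r$). Fix $i \in W$. If $y_i(S_j) = 0$ then $i \notin T_j$ by the definition of $T_j$. If $y_i(S_j) > 0$, the previous step gives $U_i(x_i(S_j)) > p^* y_i(S_j) > \beta_j y_i(S_j)$, so the constraint in (\ref{psbam}) is strict for $i$ and hence $i \notin T_j$. Either way $W \subseteq S_{j+1}$.

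Finally, note that the mechanism strictly shrinks $|S_j|$ at each step, because $T_j$ is nonempty whenever $S_j$ is (the minimum in the definition of $\beta_j$ is attained by some buyer with positive payment share, since the payment shares sum to $1$). If no $\beta_j$ ever reached $p^*$, the induction would force $W \subseteq S_j$ for every $j$ until $S_{m+1} = \phi$, contradicting $W \neq \phi$. Hence some $\beta_j \geq p^*$, the group purchases the resource (individual consistency), and the induction terminates with $W \subseteq S_r$, giving the second assertion. The only mildly subtle point in the argument is handling $y_i(B) = 0$ in the propagation of strict inequalities from $U_i(1) > p^*$, but this is absorbed by the definition of $T_j$, which excludes buyers with zero payment share from being removed.
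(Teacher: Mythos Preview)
Your proposal is correct and follows essentially the same route as the paper: both use the contrapositive of the monotonicity condition with $A=\{i\}$ (where $x_i(\{i\})=y_i(\{i\})=1$) to show that a buyer with $U_i(1)\geq p^*$ is never the bottleneck at any stage with $\beta_j<p^*$, and then conclude that such a buyer survives into the winning set $S_r$. Your version is somewhat more explicit than the paper's about the termination argument (the strict shrinking of $S_j$ forcing some $\beta_j\geq p^*$) and about the $y_i(S_j)=0$ case, but the underlying idea is identical.
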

\begin{proof}
Suppose there is a buyer $i$ for whom $U_i(1)\geq p^*$. From the definition of the resource and payment shares, $x_i(\{i\})=y_i(\{i\})=1$ and thus  this inequality is the same as $U_i(x_i(\{i\}))\geq y_i(\{i\})p^*$. But the monotonicity condition conversely implies that if 
$$U_i(x_i(\{i\}))\geq y_i(\{i\})p^*,$$
then 
\begin{equation}\label{lolo}
U_i(x_i(B))\geq y_i(B)p^*
\end{equation}
for any $B$ such that $i\in B$. Now $i\in S_1=L$ and further (\ref{lolo}) implies that $i\notin T_j$ for any $S_j$ such that the corresponding $\beta_j<p^*$. Further $U_i(x_i(\{i\}))\geq y_i(\{i\})p^*$ implies that there is some $j^*$ such that $\beta_{j^*}\geq p^*$ and $\beta_{j}<p^*$ for all $j<j^*$ and further $\{i\}\in S_{j^*}$. Thus the resource is bought in the outcome of the mechanism. This argument is valid for every buyer $i\in W$. Thus each buyer in $W$ is in the winning set of buyers.
\end{proof}
Note that the mechanism guarantees that the entire set $W$ is in the winning set of buyers $S_r$. But the share of the resource for a buyer in $W$ in the winning set could be $0$ (and hence the payment is also $0$). This can be easily avoided by choosing resource shares that satisfy $x_i(A)>0$ for all $i\in A$, for every $A\in L$. 

\section{Explicit Characterizations}
In this section we give explicit sharing schedules that satisfy the monotonicity condition with respect to class $\mathcal{C}$. For the first kind consider the following schedule of shares: \\
\hrule
\noindent\textbf{Cross-monotonic sharing schedule for the class of utility functions $\mathcal{C}$ (CMSS)}\\ For each subset $A\subseteq L$, fix $n$ non-negative numbers $(x_1(A),\cdots,x_n(A))$ which are the resource as well as the payment shares, such that they satisfy 
\begin{enumerate}
\item $\sum_{i=1}^n x_i(A)=1$ and $x_i(A)>0$ only if $i\in A$. 
\item (Cross-monotonicity) If $A\subseteq B$, then $x_i(A)\geq x_i(B)$ for all $i\in A$.
\end{enumerate}
\hrule\bigskip
Thus in the cross-monotonic sharing schedule, the payment shares and the resource shares are equal for each subset of buyers, and these shares satisfy the property of cross-monotonicity. We can then show that this choice of shares also satisfies the monotonicity requirement with respect to the class $\mathcal{C}$.

\begin{theorem} The choice of resource and payment shares in the cross-monotonic sharing schedule satisfies the monotonicity condition with respect to the class $\mathcal{C}$.
\end{theorem}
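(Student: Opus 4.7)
The plan is to reduce the monotonicity condition to the classical fact that, for a concave function $U$ on $[0,1]$ with $U(0)=0$, the ratio $U(x)/x$ is non-increasing in $x$ on $(0,1]$. Since in CMSS we have $y_i(A)=x_i(A)$ for every $A$ and $i$, the hypothesis $U_i(x_i(B))<Cy_i(B)$ rewrites as $U_i(x_i(B))<Cx_i(B)$, and the conclusion we want is $U_i(x_i(A))<Cx_i(A)$. So I need to push the inequality from a smaller resource share $x_i(B)$ to a larger share $x_i(A)$, which is exactly what monotonicity of $U(x)/x$ provides when combined with cross-monotonicity.

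First I would dispose of the degenerate case $x_i(B)=0$: then $U_i(x_i(B))=0$ and $Cy_i(B)=0$, so the hypothesis $U_i(x_i(B))<Cy_i(B)$ fails and the implication is vacuous. From now on assume $x_i(B)>0$; by cross-monotonicity $x_i(A)\geq x_i(B)>0$.

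Next I would state and verify the auxiliary fact: for any $U\in\mathcal{C}$ and any $0<s\leq t\leq 1$,
\begin{equation*}
\frac{U(t)}{t}\leq \frac{U(s)}{s}.
\end{equation*}
This follows by writing $s=(s/t)\,t+(1-s/t)\cdot 0$ and applying concavity of $U$ together with $U(0)=0$ to get $U(s)\geq (s/t)U(t)$, which rearranges to the claimed inequality.

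Finally I would combine the two ingredients. Applying the auxiliary inequality with $s=x_i(B)$ and $t=x_i(A)$ gives
\begin{equation*}
\frac{U_i(x_i(A))}{x_i(A)}\leq \frac{U_i(x_i(B))}{x_i(B)}<C,
\end{equation*}
so $U_i(x_i(A))<Cx_i(A)=Cy_i(A)$, which is precisely the monotonicity condition. I do not anticipate a real obstacle here; the only thing to be careful about is the boundary case $x_i(B)=0$ (handled vacuously) and the fact that cross-monotonicity is stated for $i\in A$, which is exactly the regime in which the monotonicity condition is quantified.
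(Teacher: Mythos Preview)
Your proof is correct and follows essentially the same approach as the paper: both reduce the claim to the single-crossing property of a line through the origin against a concave function with $U(0)=0$ (equivalently, the non-increase of $U(x)/x$), and then invoke cross-monotonicity of the shares to pass from $x_i(B)$ to the larger $x_i(A)$. Your write-up is in fact a bit more careful, as you explicitly handle the degenerate case $x_i(B)=0$ and spell out the convex-combination argument for the auxiliary inequality, whereas the paper simply states the single-crossing fact and defers to a figure.
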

\begin{proof}
By the property of any concave function $U\in \mathcal{C}$, if $U(x)< Cx$ for some $C>0$ and some $x\in[0,1]$, then $U(x')< Cx'$ for any $x\geq x'$ (see figure). Then the result follows from the cross-monotonicity of the shares.
\end{proof}
\begin{figure}[htb]
\begin{center}
\includegraphics[width=3in,angle=0]{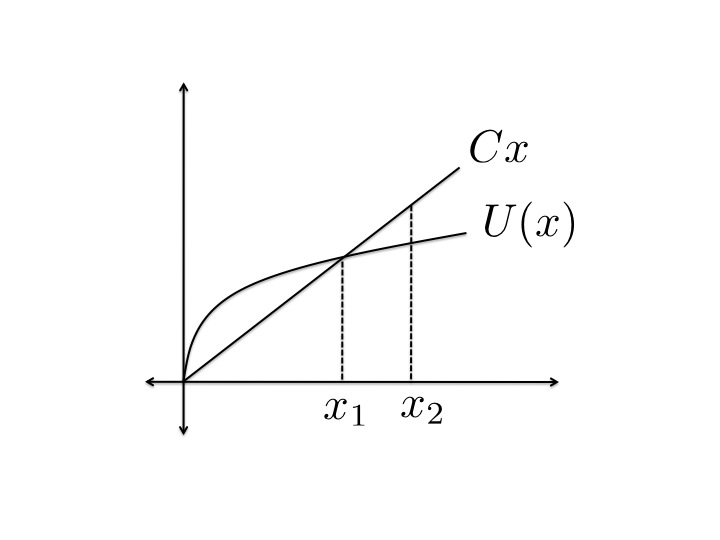}
\caption{The single crossing property of linear functions with respect to concave functions: if $U(x_1)< Cx_1$ for some $C>0$ and some $x_1\in[0,1]$, then $U(x_2)< Cx_2$ for any $x_2\geq x_1$}
\end{center}
\end{figure}

In order to define the next class of mechanisms, we define the following property. 
\begin{definition} (Single crossing property) A function $f:[0,1]\rightarrow \mathbb{R}_+$ is said to satisfy the single crossing property with respect to a class of functions $\mathcal{K}$ if for every $C > 0$ and every $U\in \mathcal{K}$, if $Cf(x)> U(x)$ for some $x\in [0,1]$, then $Cf(x')> U(x')$ for any $x'>x$.
\end{definition}
Note that a function $f(x)$ satisfies the single crossing property with a class of functions $\mathcal{K}$, if and only if any function $Cf(x)$ for $C> 0$ also satisfies this property.  We saw that the primary reason why equal payment and resource shares satisfy the monotonicity condition when these shares are cross-monotonic, is that the function $f(x)=x$ satisfies the single crossing property with respect to the class of utility functions $\mathcal{C}$. 
Utilizing this property, we can provide another explicit characterization of the monotonic aggregation mechanism. \\
\hrule
\noindent {\bf Ranked resource allocation schedule (RRAS)}\\
Suppose that $\tilde{\mathcal{C}}$ is a subset of the class $\overline{\mathcal{C}}$ and a concave function $f:[0,1]\rightarrow R_+$ satisfies the single crossing property with respect to this class. Then consider the following choice of shares in a monotonic aggregation mechanism:
\begin{itemize}
\item {\bf Resource shares:} 
\begin{itemize}
\item  Fix a ranked ordering of the buyers $\{1,\cdots,n\}$. Fix an initial set of resource shares $(x_1(L),\cdots,x_n(L))$ corresponding to set $L$ satisfying\\ $\sum_{i=1}^n x_i(L)=1$.  
\item For each subset $A\subset L$, the rank ordered vector of resource shares of buyers in $A$ is defined to be
$$(x_{j_1}(A),\cdots, x_{j_l}(A))=((1-\sum_{r=2}^l x_{j_r}(L)), x_{j_2}(L),\cdots,x_{j_l}(L)).$$ 
\end{itemize}
\item {\bf Payment shares:}
\begin{itemize}
 \item For the payment shares, for each subset $A\subset L$ and buyer $i$, define $$y_i(A)=\frac{f(x_i(A))}{\sum_{j\in A}f(x_j(A))}$$
\end{itemize}
\end{itemize}
\hrule\bigskip
The resource-sharing scheme of this schedule has the following interpretation. When a set of buyers diminishes to a smaller subset in the monotonic aggregation mechanism, the shares of the buyers that are removed are allocated to the buyer with the highest rank in the subset. Once the shares of the buyers in the largest set $L$ is fixed, this rule determines the shares corresponding to all of its subsets. The payment shares in any subset are defined to be proportional to the values of the function $f$ evaluated at the corresponding resource shares. We can again show that under the assumptions on the properties satisfied by $f$, this sharing schedule satisfies the monotonicity condition.
\begin{theorem}
The choice of resource and payment shares in the ranked resource allocation schedule satisfies the monotonicity condition with respect to the class $\tilde{\mathcal{C}}$.
\end{theorem}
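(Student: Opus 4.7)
The plan is to translate the monotonicity requirement into the ratio form $U_i(x_i(A))/y_i(A) \le U_i(x_i(B))/y_i(B)$ for $A \subseteq B$ and $i \in A$, so that the hypothesis $U_i(x_i(B))/y_i(B) < C$ immediately forces $U_i(x_i(A))/y_i(A) < C$. Plugging in $y_i(A) = f(x_i(A))/\sum_{j \in A} f(x_j(A))$, this becomes
\[
\frac{U_i(x_i(A))}{f(x_i(A))} \cdot \sum_{j \in A} f(x_j(A)) \;\le\; \frac{U_i(x_i(B))}{f(x_i(B))} \cdot \sum_{j \in B} f(x_j(B)),
\]
which I will establish by bounding each factor separately. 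Because the monotonicity property composes along chains $A \subseteq A' \subseteq B$, it is enough to verify it when $B = A \cup \{k\}$ and then iterate.

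For the first factor, I will first show that the ranked-order construction gives cross-monotonic resource shares: $x_i(A) \ge x_i(B)$ for every $i \in A$. For any buyer $i$ who is not top-ranked in $A$, both $x_i(A)$ and $x_i(B)$ equal $x_i(L)$, since such $i$ is also not top-ranked in $B$. For $i$ top-ranked in $A$, the formula $x_i(A) = x_i(L) + \sum_{j \in L \setminus A} x_j(L)$ dominates $x_i(B)$ in the two sub-cases $i = j_1^B$ (where the residual is taken over the smaller set $L \setminus B$) and $i \ne j_1^B$ (where $x_i(B) = x_i(L)$). The single crossing property of $f$ with respect to $\tilde{\mathcal{C}}$ is equivalent to $U(x)/f(x)$ being non-increasing for $U \in \tilde{\mathcal{C}}$, so combining with $x_i(A) \ge x_i(B)$ gives $U_i(x_i(A))/f(x_i(A)) \le U_i(x_i(B))/f(x_i(B))$.

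For the second factor, I will show $\sum_{j \in A} f(x_j(A)) \le \sum_{j \in B} f(x_j(B))$ when $B = A \cup \{k\}$. Writing $j_1^A$ and $j_1^B$ for the respective top-ranked members, all non-top-ranked shares agree with $x_j(L)$ and cancel in the difference of the two sums. In both the case $j_1^B = j_1^A$ (where $x_{j_1^A}(A) - x_{j_1^A}(B) = x_k(L)$) and the case $j_1^B = k$ (where $x_{j_1^A}(A) - x_{j_1^A}(L) = x_k(B)$), the residual reduces to $f(a) + f(b) - f(a+b)$ for appropriate non-negative $a,b$ with $a+b \le 1$. The concavity estimate $f(a) + f(b) \ge f(a+b)$ then closes the argument: expressing $a = \tfrac{a}{a+b}(a+b) + \tfrac{b}{a+b}\cdot 0$ as a convex combination and applying concavity yields $f(a) \ge \tfrac{a}{a+b} f(a+b) + \tfrac{b}{a+b} f(0)$, symmetrically for $f(b)$, and summing gives $f(a) + f(b) \ge f(a+b) + f(0) \ge f(a+b)$ since $f \ge 0$.

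Combining the two factor bounds completes the monotonicity verification. I expect the main obstacle to be the bookkeeping in the second factor: the way the top-ranked buyer's residual share shifts when $k$ is either dominated by $j_1^A$ or displaces it as the new top-ranked element of $B$, and verifying that in each sub-case the difference collapses to a single subadditivity inequality. Once this case analysis is set up carefully, the concavity estimate and the translation of single crossing into a non-increasing ratio each go through in a few lines.
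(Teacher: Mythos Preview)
Your proposal is correct and follows essentially the same approach as the paper: both arguments use cross-monotonicity of the ranked resource shares together with the single crossing property of $f$ to control the $U_i/f$ factor, and concavity (subadditivity) of $f$ to establish $\sum_{j\in A} f(x_j(A)) \le \sum_{j\in B} f(x_j(B))$. The only cosmetic differences are that you reduce to one-step inclusions $B=A\cup\{k\}$ with an explicit case split on the top-ranked buyer and rephrase single crossing as ``$U/f$ non-increasing,'' whereas the paper handles general $A\subseteq B$ directly via a telescoping sum and applies the single crossing property verbatim.
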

\begin{proof}
We need to show that for any $C>0$, for any two subsets $A$ and $B$ such that $A\subseteq B$, and for any $i\in A$, if $U_i(x_i(B))< C\frac{f(x_i(B))}{\sum_{i\in B}f(x_j(B))}$ then $U_i(x_i(A))< C\frac{f(x_i(A))}{\sum_{j\in A}f(x_j(A))}$ for every $U_i \in \mathcal{C}$. Now, the choice of the resource shares is such that $x_i(A)\geq x_i(B)$ and so from the single crossing property of the function $f$, we have that 
$$U_i(x_i(A))< C\frac{f(x_i(A))}{\sum_{j\in B}f(x_j(B))}\textrm{ for every }U_i \in \tilde{\mathcal{C}}.$$
Thus we need to show that $\sum_{j\in B}f(x_j(B))\geq \sum_{j\in A}f(x_j(A))$. Let $i^*$ be the highest ranked buyer in the set $A$ and let $(x_{j_1}(b),\cdots,x_{j_{|B\setminus A|}})$ be the ordered vector of shares of buyers in $B\setminus A$. Then from the definition of the resource shares, since the shares of the buyers in $A\setminus{i^*}$ are the same in the sets $A$ and $B$ and since in $A$, $i^*$ gets the all the shares of the buyers in $B\setminus A$, we have that
$$\sum_{j\in B}f(x_j(B))- \sum_{j\in A}f(x_j(A))=\sum_{r=1}^{|B\setminus A|}f(x_{j_r}) + f(x_{i^*})- f(\sum_{r=1}^{|B\setminus A|}x_{j_r}+x_{i^*})$$
\begin{eqnarray*}
&=&\sum_{r=1}^{|B\setminus A|}\bigg(f(x_{j_r})-f(0)\bigg)-\bigg(f(\sum_{l=1}^{r-1}x_{j_l}+x_{i^*} + x_{j_r})-f(\sum_{l=1}^{r-1}x_{j_l}+x_{i^*})\bigg)+f(0)\\
&\geq& \sum_{r=1}^{|B\setminus A|} f(0)\geq 0.
\end{eqnarray*}
The first inequality holds since $f$ is concave and the shares are non-negative, and the second holds since $f$ is non-negative.
\end{proof}
Note that the resource shares in the RRAS mechanism are cross-monotonic. Thus we can naturally compare this mechanism to the CMSS with the same resource shares.
\begin{example}
Consider the class of utility functions $$\tilde{\mathcal{C}}=\{g(x)=cx^k: c\geq 0,\, k\in[0,\frac{1}{2}]\}.$$ One can easily show that the function $f(x)=\sqrt{x}$ satisfies the single crossing property with respect to this class. Consider three buyers 1, 2 and 3 with utility functions $U_1(x)=x^{\frac{1}{4}}$, $U_2(x)=x^{\frac{1}{3}}$ and $U_3(x)=\sqrt{x}$ respectively. Suppose that the priority order of the buyers is $\{1,2,3\}$. Let the resource shares corresponding to the entire set of the buyers be $(\frac{1}{2},\frac{1}{4},\frac{1}{4})$. This determines the resource and payment shares for all the subsets, as described in the priority based aggregation mechanism. The following table shows these shares for the three buyers. The last two columns show the sequence of $\{\beta_j\}$ computed using the RRAS shares and the CMSS shares (in which the payment shares are the same as the resource shares) respectively. The $\{\beta_j\}$ are only computed for the sets $\{S_j\}$ that are encountered in the mechanism. Thus as shown in the table, for the RRAS shares, $S_1=\{1,2,3\}$, $S_2=\{1,2\}$ and $S_3=\{2\}$ while for the CMSS shares, $S_1=\{1,2,3\}$, $S_2=\{2,3\}$ and $S_3=\{3\}$.
\begin{table}[ht]
\centering
\begin{tabular}{ccccc}\hline\hline
Subsets     & Resource shares          & RRAS Payment shares    & RRAS $\{\beta_j\}$ & CMSS $\{\beta_j\}$ \\
\hline

$\{1,2,3\}$ & $(\frac{1}{2},\frac{1}{4},\frac{1}{4})$ & $(\frac{1}{1+\sqrt{2}},\frac{1}{2+\sqrt{2}},\frac{1}{2+\sqrt{2}})$ & 1.707              & 1.68               \\
$\{1,2\}$    & $(\frac{3}{4},\frac{1}{4},0)$  & $(\frac{\sqrt{3}}{1+\sqrt{3}},\frac{1}{1+\sqrt{3}},0)$  & 1.467  & -                  \\
$\{2,3\}$   &$(0,\frac{3}{4},\frac{1}{4})$ & $(0,\frac{\sqrt{3}}{1+\sqrt{3}},\frac{1}{1+\sqrt{3}})$ & -               & 1.21               \\
$\{1,3\}$   & $(\frac{3}{4},0,\frac{1}{4})$ & $(\frac{\sqrt{3}}{1+\sqrt{3}},0,\frac{1}{1+\sqrt{3}})$ & -   & -                  \\
$\{1\}$     & $(1,0,0)$             & $(1,0,0)$                      & -                  & -                  \\
$\{2\}$     & $(0,1,0)$            & $(0,1,0)$                      & 1                  & -                  \\
$\{3\}$     & $(0,0,1)$            & $(0,0,1)$                      & -                  &  1 \\
\hline                 
\end{tabular}
\end{table}
Note that the ordered vector of $\beta$ values computed in RRAS mechanism dominates the vector of values computed in the CMSS mechanism .
\end{example}


\section{Conclusion }
We designed a class of preference aggregation mechanisms that we call monotonic aggregation mechanisms that enable a group of agents with private utilities to make purchasing decisions for a shared resource. The key properties of the mechanism are coalition-strategyproofness, individual consistency and the ability to exactly recover the price of the resource in the market. We gave two explicit characterizations of this mechanism for the case where the utility functions of the buyers are concave and non-decreasing. Apart from these characterizations, one may find other explicit characterizations of the monotonic aggregation mechanism for specific classes of utility functions. Generalized techniques of designing resource and payment sharing schemes that satisfy the monotonicity condition for broad classes of utility functions would be very useful. 
The monotonic aggregation mechanism is not efficient. It may be the case that the group of agents may jointly be able to pay for the resource with some payment and resource division, but the mechanism leads to an outcome in which they do not buy the resource. Although, this efficiency loss is unavoidable in mechanisms that satisfy participation constraints and require an exact recovery of the fixed cost, an insight into the worst-case welfare properties of our mechanism would be very helpful.
\bibliographystyle{plain}

\bibliography{mybibfile}

\end{document}